\documentclass[12pt]{article}
\pdfoutput=1
\usepackage{amsmath,amssymb,amsthm,bm,graphicx,float,array,multirow,multicol,rotfloat,caption,subcaption,hyperref,cleveref,enumerate,geometry,mathdots,adjustbox,booktabs,parskip,mathtools,tikz,pdflscape,csquotes,lscape,rotating,empheq,braket,tikz-cd}
\usepackage[para]{threeparttable}
\usepackage[all]{xy}
\usepackage[normalem]{ulem}
\usepackage[numbers,sort&compress]{natbib}
\geometry{verbose,tmargin=3cm,bmargin=3cm,lmargin=2.5cm,rmargin=2.5cm,footskip=1cm}
\numberwithin{equation}{section}
\numberwithin{figure}{section}
\setlength{\parindent}{3.5 ex}

\allowdisplaybreaks
\makeatletter
\usetikzlibrary{arrows, positioning, decorations.pathmorphing, decorations.markings, decorations.pathreplacing, decorations.markings, matrix, patterns}
\setlength{\parindent}{3.5 ex}

\hypersetup{colorlinks=true}
\hypersetup{linkcolor=black}
\hypersetup{citecolor=black}
\hypersetup{urlcolor=black}
\makeatletter

\theoremstyle{plain}
\newtheorem*{thm*}{Theorem}
\newtheorem{thm}{Theorem}[section]

\newtheorem{lem}[thm]{Lemma}

\theoremstyle{definition}

\newtheorem*{defn*}{Definition}

\crefname{lemma}{lemma}{lemmas}
\Crefname{lemma}{Lemma}{Lemmas}
\crefname{thm}{theorem}{theorems}
\Crefname{thm}{Theorem}{Theorems}
\crefname{defn}{definition}{definitions}
\Crefname{defn}{Definition}{Definitions}

\DeclarePairedDelimiterX{\abs}[1]{\lvert}{\rvert}{\ifblank{#1}{{}\cdot{}}{#1}}
\makeatother

\newtheorem*{thm:main}{Theorem \ref{thm:main}}
\newtheorem*{thm:prop}{Proposition \ref{thm:prop}}

\begin{document}

\begin{titlepage}
\vspace*{-3cm} 
\begin{flushright}
{\tt MIT-CTP/5927}\\
\end{flushright}
\begin{center}
\vspace{2.2cm}
{\LARGE\bfseries A no-go  theorem for large $N$ closed universes}\\
\vspace{1cm}
{\large
Elliott Gesteau\\}
\vspace{.6cm}
{ Center for Theoretical Physics -- a Leinweber Institute,\\ Massachusetts Institute of Technology, Cambridge, MA 02139, USA}\par\vspace{.3cm}
{Center of Mathematical Sciences and Applications,\\ Harvard University, Cambridge, MA 02138, USA}
\vspace{.4cm}

\scalebox{.95}{\tt  egesteau@mit.edu}\par
\vspace{1cm}
{\bf{Abstract}}\\
\end{center}
Under conservative assumptions, it is established at a mathematical level of rigor that if correlation functions of single trace operators in a sequence of states of $O(N^0)$ energy of a two-sided holographic conformal field theory admit a large $N$ limit, then they must be described a pure state in the large $N$ Hilbert space of free field theory on two copies of vacuum AdS. This result clarifies recent discussions concerning the possible emergence of a semiclassical baby universe in the large $N$ limit of a low-energy partially entangled thermal state. The proof heavily relies on dominated convergence arguments from mathematical analysis. Some comments are also offered on the relationship between various recently proposed ways of restoring the emergence of the baby universe and the relaxation of different assumptions of the theorem.
\\
\vfill 
\end{titlepage}

\tableofcontents
\newpage
\section{Introduction}

Recent developments in holography increasingly strongly suggest that the Hilbert space of a closed universe is one-dimensional \cite{MarMax20,McNVaf20,AlmMah19a,PenShe19,UsaZha24,UsaWan24}. The consequences are drastic: in particular, a closely related statement is that baby universes cannot emerge from a CFT through the AdS/CFT correspondence, at least in a standard way \cite{AntRat24,EngGes25a}. This strong puzzle has been leading to exciting developments: for example, it has been proposed that in order to make sense of semiclassical physics in a closed universe, we need to explicitly take into account the presence of an observer \cite{HarUsa25,AbdSte25,AkeBue25,EngGes25b,AntSas25}. This set of ideas represents an exciting path forward in the direction of applying the lessons learned from the AdS/CFT correspondence to quantum cosmology.

Extroardinary claims require extroardinary evidence. The recent ``observer-based" proposals involve a significant deviation from quantum mechanics \cite{HarUsa25}, and from the standard rules of the gravitational path integral \cite{AbdSte25}. Therefore, it is very important to make sure that no loophole in the arguments that have been leading us to conclude that such modifications are necessary has been overlooked. With this in mind, this paper aims at providing a mathematically rigorous analysis, and confirmation, of one such family of arguments.

An exciting setup to study closed universes in the context of the AdS/CFT correspondence was proposed and analyzed at length by Antonini, Swingle and Sasieta \cite{AntSas23} (see also \cite{AntSim23,SahVan24}). The authors of \cite{AntSas23} introduced a low-energy partially entangled thermal state of the form 
\begin{align}
\ket{\Psi}=\frac{1}{\sqrt{Z}}\sum_{A,B}e^{-\frac{\beta_L}{2}E_A}e^{-\frac{\beta_R}{2}E_B}\mathbb{O}_{AB}\ket{A}_L\ket{B}_R,
\label{eq:ASSState}
\end{align}
where $\ket{A}, \ket{B}$ are low energy eigenstates of a holographic CFT, $\mathbb{O}$ is a heavy operator, and the inverse temperatures $\beta_L,\beta_R$ are taken above the Hawking--Page transition.

\begin{figure}
\centering
\includegraphics[scale=0.4]{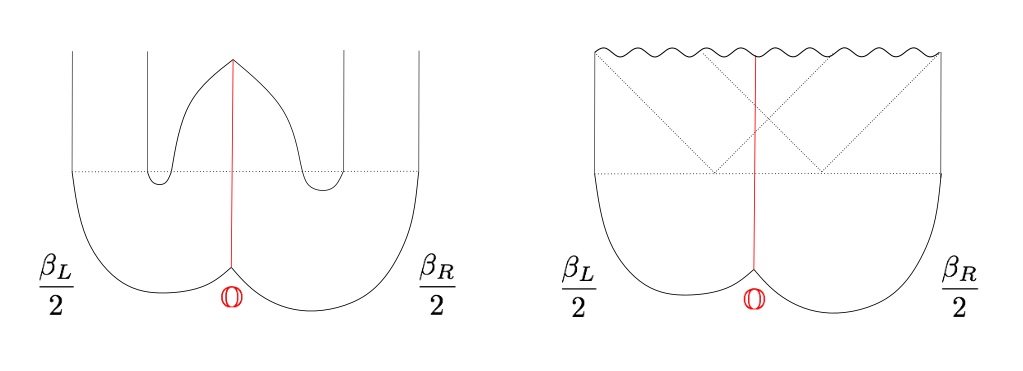}
\caption{Predictions of classical gravity for the dual of the state \eqref{eq:ASSState}. On the left panel, which is the primary case of interest in this paper, the inverse temperatures $\beta_L,\beta_R$ are taken above the Hawking--Page transition (i.e. low temperature), which is the main case of interest of this paper. The solution of Einstein's equation predicts two copies of AdS entangled with a closed universe. On the right panel, the inverse temperatures $\beta_L,\beta_R$ are taken below the Hawking--Page transition (i.e. high temperature). The gravity dual is then thought to be a long wormhole geometry.}
\label{fig:pets}
\end{figure}

If the inverse temperatures $\beta_L,\beta_R$ had been taken below the Hawking--Page transition, the bulk dual to this state would be a long wormhole geometry \cite{GoeLam18}. However, since the present state is of low temperature, a gravitational calculation suggests that the dual consists of two copies of thermal AdS entangled with a closed universe \cite{AntSas23}, see Figure \ref{fig:pets}.

A puzzling aspect of this state was however later pointed out by Antonini and Rath \cite{AntRat24}. Since the state $\ket{\Psi}$ has $O(N^0)$ energy in some regime of parameters, it can be well-approximated by its truncation to a microcanonical window of $O(N^0)$ size in this regime. Such a state is an excitation of the vacuum by a fixed, finite number of light operators, and its dual is therefore believed to be two entangled copies of AdS, with no closed universe to be seen. This led the authors of \cite{AntRat24} to question whether the state $\ket{\Psi}$ really has a unique semiclassical dual. Indeed the state $\ket{\Psi}$ appears to have two descriptions:\newpage
\begin{itemize}
\item \textbf{Description 1:} Two copies of AdS entangled with a closed universe.
\item \textbf{Description 2:} Two copies of AdS in an overall pure state.
\end{itemize}

A resolution of this puzzle was recently proposed by Engelhardt and the author \cite{EngGes25a}. The key idea was that there is a major difference between Descriptions 1 and 2 in that the state of the two copies of AdS is \textbf{mixed} in Description 1, whereas it is \textbf{pure} in Description 2. The strategy of \cite{EngGes25a} was then to introduce a ``swap" operator $\mathcal{S}$ which computes the purity of the causal wedge. This operator can then be reconstructed on the boundary using the standard HKLL map followed by the extrapolate dictionary. Since the boundary expectation value of this reconstruct can be computed and found to be consistent to the bulk expectation value of $\mathcal{S}$ in Description 2, \cite{EngGes25a} concluded that the correct bulk dual of the state $\ket{\Psi}$ is Description 2, which was framed in \cite{EngGes25a} as the statement that the baby universe of Description 1 is not semiclassical, in that there must be a breakdown of the $G\hbar$ expansion in the baby universe. The argument of \cite{EngGes25a} was later reformulated in terms of a holographic code model in \cite{EngGes25b,Hig25}, see also \cite{AntSas25} for related comments.

The sense in which the computation of \cite{EngGes25a} implies that the baby universe is not semiclassical has been extensively discussed in the recent literature. In particular, various ways of restoring the semiclassical description of the baby universe for an internal observer have been proposed, including a resort to a new form of complementarity \cite{EngGes25b}, and a potential method for averaging over the details of the operator $\mathbb{O}$ \cite{AntSas25}. All these approaches have an idea in common: the failure of semiclassicality found in \cite{EngGes25a} should not be interpreted as a failure of semiclassicality to an observer in the baby universe, but rather as a failure of semiclassicality to an external observer who applies the traditional rules of AdS/CFT. 

It is somewhat ambiguous what constitutes the ``traditional rules" of AdS/CFT. In this paper, what will be meant by traditional rules is the notion of large $N$ limit. One of the most rigorous ways to think about AdS/CFT is as follows: given a sequence of finite $N$ CFT states $\ket{\Psi^{(N)}}$, there exists a family of sequences of finite $N$ observables whose expectation values converge when $N\rightarrow\infty$ towards the expectation values of operators of QFT in the emergent bulk background. Some of these sequences are complicated to describe (they correspond to the exponentially complex observables described in qubit models \cite{BroGha19}), but there exists a family of observables that always has a good large $N$ limit: bounded functions of a fixed number of single trace operators.\footnote{The remainder of this paper will make use of a slight abuse of terminology and refer to these as single trace operators.} Given a sequence of finite $N$ single trace operators $S_N$, the large $N$ limit of $\bra{\Psi^{(N)}}S_N\ket{\Psi^{(N)}}$ exists\footnote{More precisely, a one-point function needs to be subtracted in order to ensure the existence of this large $N$ limit.} and corresponds to the expectation value of a free field theory operator on the causal wedge of the emergent background. Causal wedge reconstruction \cite{HamKab06} (or the simplest case of subregion-subalgebra duality \cite{Har16,LeuLiu22}) guarantees that all the elements of the algebra of free field operators on the causal wedge, and their correlations, can be recovered by such large $N$ limits of correlation functions of single trace operators.

The above discussion makes it clear that all correlation functions of fields on the causal wedge of the large $N$ background, and therefore the state of these fields, can be recovered from large $N$ limits of correlation functions of single trace operators. In particular, the large $N$ limit of these correlators should tell us whether the large $N$ causal wedge is in a pure state or in a mixed state.

The goal of this paper is to apply this logic to the state introduced in \cite{AntSas23}, and study the purity of its causal wedge in the large $N$ limit at a mathematical level of rigor. Since it will be necessary to be careful about the distinction between finite and infinite $N$, let us slightly update the expression \eqref{eq:ASSState}. Since the state \eqref{eq:ASSState} is a CFT state, it should really be considered at finite $N$, so let us make this $N$-dependence explicit:
\begin{align}
\ket{\Psi^{(N)}}=\sum_{A,B}c_{AB}^{(N)}\ket{A^{(N)}}_L\ket{B^{(N)}}_R,
\label{eq:finiteN}
\end{align}
where $c_{AB}^{(N)}$ is an $N$-dependent coefficient given by 
\begin{align}
c_{AB}^{(N)}=\frac{1}{\sqrt{Z^{(N)}}}e^{-\frac{\beta_L}{2}E_A^{(N)}}e^{-\frac{\beta_R}{2}E_B^{(N)}}\mathbb{O}_{AB}^{(N)}.
\end{align}
We know that for sequences of finite $N$ single trace operators $S_N$ (or bounded functions thereof), the correlators $\bra{\Psi^{(N)}}S_N\ket{\Psi^{(N)}}$ converge in the large $N$ limit towards those of a free field theory on two copies of AdS. The goal is to determine whether this field theory is in an overall mixed state (which would be the prediction given by Description 1) or in an overall pure state (which would be the prediction given by Description 2). The theorem proven in this paper invalidates the first option, and confirms the second option, as predicted by \cite{EngGes25a}.

Before moving on to stating the theorem itself, it is worth stressing a few of its features and implications.

\begin{enumerate}
\item The theorem gives one new way to think about the precise meaning of the statement of \cite{EngGes25a} that the baby universe is not semiclassical: it does not emerge in the large $N$ limit of the sequence of finite $N$ CFT states \eqref{eq:finiteN}, in that the correlators of single traces in the large $N$ state are those of a pure state.
\item Although the assumptions made in \cite{EngGes25a}, such as the fact that the HKLL map and extrapolate dictionary can be thought of as an approximate isometry, as well as the fact that the bulk effective field theory can be modeled by a finite-dimensional code subspace on which operators of $O(N^0)$ complexity can be mapped to the boundary, are standard and very well-understood in the quantum information literature on AdS/CFT (see for example \cite{FauLi22,Ges23} for precise statements relating holographic codes to the large $N$ limit), one may be tempted to try to find loopholes about these assumptions in the particular case at hand, since the failure of the gravity description involving a closed universe has such dramatic consequences. The theorem presented in this paper makes no such assumption. It does not model AdS/CFT with a qubit code, nor does it assume any isometric property of the HKLL map. It only relies on the large $N$ limit of a \textit{bona fide} conformal field theory, with no simplification of the setup.
\item Since the previous discussions \cite{AntRat24,EngGes25a} of states of the form \eqref{eq:ASSState} were not entirely explicit about the large $N$ limit, their most straightforward interpretation was that they assumed that the coefficients $c_{AB}^{(N)}$ in \eqref{eq:finiteN} are $N$-independent. This may not be true in general and it will be argued in a paper of Liu set to appear on the same day as this paper \cite{Liu25} that in the generic case they may not be. The theorem shown here guarantees that such an $N$-dependence does not alter the conclusion that the causal wedge is in a pure state, under the assumption that the state defined in \eqref{eq:finiteN} admits a large $N$ limit in the usual sense. The only way that such an $N$-dependence can lead to a bulk description with a baby universe is if one relaxes the definition of large $N$ limit. One possible way to do so is to only require it on average, see Section \ref{sec:Discuss} as well as a concrete example and extensive discussion in \cite{Liu25}.
\item The mathematical tools used in the proof teach us some interesting lessons. In particular, it is the fact that dominated convergence arguments can be applied in the present case that prevent the baby universe from emerging below the Hawking--Page temperature.
\item Various proposals for restoring semiclassical physics in the baby universe can be thought of as relaxing different assumptions of the theorem.
\end{enumerate}

The remainder of this paper is organized as follows. In Section \ref{sec:Thm}, after a review on the relevant features of the large $N$ limit and the algebraic approach to AdS/CFT, the precise theorem of this paper is announced. Section \ref{sec:Proof} presents the proof of the theorem.  Various perspectives and possible generalizations are also offered on the result. In particular, it is pointed out that the applicability of various techniques of mathematical analysis is crucial, while these techniques would usually fail in other cases of interest in AdS/CFT. Finally, in Section \ref{sec:Discuss}, a connection is also made between several recent attempts to restore the semiclassicality of the baby universe and the lifting of different assumptions of the theorem.

\section{Theorem}
\label{sec:Thm}
This opening section states the main theorem derived in this paper. In order to better motivate the result and explain its relevance, some background on the algebraic approach to the large $N$ limit of AdS/CFT is also provided.

\subsection{Background: the large $N$ limit and the algebraic approach to AdS/CFT}

The notion of large $N$ limit lies at the core of the original formulation of the AdS/CFT correspondence \cite{Mal97}. It is only in the strict $N=\infty$ limit that an exact classical spacetime background emerges in the bulk. In this limit, all observables in the causal wedge can be described by a generalized free field theory on the boundary, which is an equivalent description of a free quantum field theory in the causal wedge \cite{LeuLiu21b,LeuLiu22,GesLiu24}. 

In a rigorous treatment of the large $N$ limit, the finite $N$ and infinite $N$ Hilbert spaces are treated on a different footing. At finite $N$, the Hilbert space $\mathcal{H}_N$ is the Hilbert space of the CFT at a large but finite value of the central charge. By contrast, at infinite $N$, there is no actual CFT to define the Hilbert space. Instead, the Hilbert space is defined through the correlation functions of operators that admit a large $N$ limit, chief among which are the single trace operators. Consider $\ket{\Psi^{(N)}}$ is a finite $N$ state, and a sequence of finite $N$ single trace operators 
\begin{align}
S_N=\mathrm{Tr}\,\phi_N-\langle \mathrm{Tr}\,\phi_N\rangle.
\label{eq:singletrace}
\end{align}
The sequence of states $\ket{\Psi^{(N)}}$ has a good large $N$ limit if the expressions $\bra{\Psi^{(N)}}S_N\ket{\Psi^{(N)}}$ are convergent as $N\rightarrow\infty$ for all linear combinations of products of normalized single trace operators of the form \eqref{eq:singletrace}.

Note that a one point function has been subtracted in Equation \eqref{eq:singletrace} to ensure the state admits a large $N$ limit \cite{Wit21}. This one point function encodes information about the large $N$ background.

For those sequences of states $\ket{\Psi^{(N)}}$ that admit a large $N$ limit in the above sense, the algebra of observables in the causal wedge and an associated Hilbert space are defined directly at large $N$ through the large $N$ correlation functions of single trace operators. If one state in this Hilbert space satisfies large $N$ factorization the construction can be made more explicit. In particular, in the bosonic case it is generated by Weyl operators of the form \begin{align}W(f)=e^{i\phi(f)},\end{align} where $f$ is a smearing function on the boundary, and the field operators $\phi$ satisfy the relation 
\begin{align}
[\phi(f),\phi(g)]=i\int f(x)E(x,y)g(y),
\end{align}
where $E(x,y)$ is the imaginary part of the large $N$ Wightman function. This algebra of observables is then represented on an infinite $N$ Hilbert space $\mathcal{H}$ whose inner product is engineered so that in the factorizing state of interest $\ket{\Omega}$,
\begin{align}
\langle\Omega\vert\phi(f)\phi(g)\vert\Omega\rangle=\int f(x)G(x,y)g(x)dx,
\end{align}
where $G(x,y)$ is the large $N$ two point function. An important point to be noted here is that these algebras and inner products are sector-dependent, in particular different large $N$ two point functions can lead to different infinite $N$ Hilbert spaces and algebras. The reader is referred to \cite{FauLi22,FurLas23,GesSan24}
for detailed accounts of the mathematical construction described here. The main conceptual point that will be important in what follows is that all the information about the inner product of the infinite $N$ Hilbert space and expectation values of operators on this Hilbert space is contained in large $N$ correlators.

How do we then relate the finite $N$ picture to the infinite $N$ picture? We say that a sequence of states $\ket{\Psi^{(N)}}$ limits to a state $\ket{\Psi}$ in a given large $N$ Hilbert space, if for all sequences $(S_N)$ of finite $N$ single trace operators with the appropriate one-point function subtracted, represented at large $N$ by a free field theory operator $S$ of the corresponding large $N$ algebra, 
\begin{align}
\langle\Psi^{(N)}\vert S_N\vert\Psi^{(N)}\rangle\underset{N\rightarrow\infty}{\longrightarrow} \langle{\Psi}\vert S\vert \Psi\rangle.
\end{align}

The operator $S$ is defined by replacing finite $N$ single trace operators by generalized free field operators in the most straightforward way, for example, at large $N$, $\mathrm{Tr}\,\phi_N-\langle \mathrm{Tr}\,\phi_N\rangle$ gets replaced by a field operator $\phi$.

One large $N$ sector will be of particular interest in this work: the one of two copies of vacuum AdS. Large $N$ convergence towards this sector is conceptually simpler in this case, since the $N$-dependent part of the one point function of most single trace operators vanishes in the vacuum state.\footnote{I am grateful to Hong Liu for clarifications on this point.} Therefore in the case of this Hilbert space, we can think of the $(S_N)$ sequences literally as sequences of finite $N$ single trace operators or bounded functions thereof, with no one point function subtracted.

\subsection{Assumptions, useful facts, and theorem statement}

Let us now state the precise theorem proven in this paper. Consider a sequence of finite $N$ states 
\begin{align}
\ket{\Psi^{(N)}}:=\sum_{A,B}c_{AB}^{(N)}\ket{A^{(N)}}\ket{B^{(N)}},
\end{align}
where the $\ket{A^{(N)}}$, $\ket{B^{(N)}}$ are the energy eigenstates at finite $N$ ordered by increasing order of energy (for simplicity let us assume no degeneracy). For $\Delta>0$ fixed and independent of $N$, we also introduce the truncated states
\begin{align}
\ket{\Psi_\Delta^{(N)}}:=\sum_{E<\Delta}c_{AB}^{(N)}\ket{A^{(N)}}\ket{B^{(N)}}.
\end{align}
The setup is:
\begin{itemize}
\item \textbf{Assumption 1: well-defined large $N$ limit in the causal wedge.} The sequences $\bra{\Psi^{(N)}}S_N\ket{\Psi^{(N)}}$, for $(S_N)$ a sequence of single trace operators or bounded functions thereof, converge when $N\rightarrow\infty$. We will also use the technical assumption that the $S_N$ can be chosen to be \textbf{uniformly bounded} in norm, i.e. bounded in norm by a constant independent of $N$.

\textit{This assumption supposes that the sequence of states $(\ket{\Psi^{(N)}})$ admits a well-defined large $N$ limit in the vacuum sector. The existence of a large $N$ limit is necessary for the emergence of semiclassical physics in the context of traditional discussions of the AdS/CFT correspondence \cite{Wit21b,LeuLiu21b,FauLi22}. Moreover in both descriptions 1 and 2 of the Antonini--Rath state of \cite{AntRat24}, the causal wedge observables, i.e. single trace operators, are present.\footnote{At least if it is assumed that operators that converge observables on a vacuum AdS background never have any one point function subtracted.} The question is to know whether they are in an overall pure or mixed state.
Uniform boundedness is obtained, for example, in the bosonic case by taking exponentials of the single trace operators, see Equation (3.45) of \cite{FauLi22}.}

\item \textbf{Assumption 2: low energy.} The family of states $(\ket{\Psi^{(N)}})$ has uniformly bounded energy in $N$ by a constant $E_0=O(1)$, independent of $N$.

\textit{This assumption is in particular true for the state considered in Equation \eqref{eq:ASSState}, in the regime of parameters of interest here.}
\end{itemize}

Two true facts about holographic conformal field theories will also be used:
\begin{itemize}
\item \textbf{Fact 1: constant vacuum excitations are pure.} If the $c^{(N)}_{AB}$ are \textbf{constant} functions $c_{AB}^0$ of $N$, for generic choices of $\Delta$ the sequences $\bra{\Psi_\Delta^{(N)}}S_N\ket{\Psi_\Delta^{(N)}}$, for $(S_N)$ a sequence of single trace operators asymptoting to an observable $S$ in the vacuum sector, all converge when $N\rightarrow\infty$ towards $\bra{\Psi_\Delta}S\ket{\Psi_\Delta}$, where $\ket{\Psi_\Delta}$ is a \textbf{pure} state in the vacuum sector (up to normalization). Moreover the state $\ket{\Psi_\Delta}$ has the explicit expression in the large $N$ Hilbert space of two copies of vacuum AdS:
\begin{align}
\ket{\Psi_\Delta}=\sum_{E<\Delta}c_{AB}^0\ket{A}\ket{B},
\end{align}
where now $A$ and $B$ label the energy eigenstates in the \textbf{large $N$} vacuum sector. 

\textit{Note that this fact is only obvious if the $c^{(N)}_{AB}$ are \textbf{constant} functions $c_{AB}^0$, see \cite{Liu25} for more on this.}
\item \textbf{Fact 2: spectrum property.} In a sequence of finite $N$ holographic CFTs, the number of states whose energy is upper bounded by a constant $E_0=O(N^0)$ is bounded by a constant $K$ independent of $N$, and this number stabilizes as $N\rightarrow\infty$ for a generic choice of $E_0$. 

\textit{Note that this is in sharp contrast with the high energy part of the spectrum: there are $O(e^{N^2})$ states whith $O(N^2)$ energy \cite{FesLiu06}.}
\end{itemize}

Let us now announce the main theorem of this paper.

\begin{thm}
Under the two assumptions and two facts above, there exists a \textbf{pure} state $\ket{\Psi}$ in the large $N$ Hilbert space of the vacuum such that for all uniformly bounded sequences of observables $(S_N)$ of single trace observables described by a large $N$ bounded field operator $S$, \begin{align}
\bra{\Psi^{(N)}}S_N\ket{\Psi^{(N)}}\underset{N\rightarrow\infty}{\longrightarrow} \bra{\Psi}S\ket{\Psi}.
\end{align}
\label{thm:main}
\end{thm}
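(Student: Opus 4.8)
\section*{Proof proposal}

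The plan is to reduce the full correlators to their low-energy truncations, where by Fact 2 only finitely many eigenstates participate, extract limiting coefficients by compactness, and assemble them into a single pure state. The low-energy assumption is what supplies a tail bound uniform in $N$, and it is this uniformity that legitimizes the ensuing interchange of the $N\to\infty$ and truncation limits; this is the concrete incarnation of the dominated-convergence mechanism advertised in the introduction.

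First I would establish a uniform truncation estimate. Writing $\ket{\Psi^{(N)}}=\ket{\Psi_\Delta^{(N)}}+\ket{\Psi_{>\Delta}^{(N)}}$ for the split into the parts with energy below and above $\Delta$, Assumption 2 together with Markov's inequality gives
\begin{align}
\|\ket{\Psi_{>\Delta}^{(N)}}\|^2=\sum_{E\geq\Delta}|c_{AB}^{(N)}|^2\leq\frac{E_0}{\Delta},
\end{align}
a bound independent of $N$. Since $\|S_N\|\leq C$ uniformly by Assumption 1, applying Cauchy--Schwarz to the three cross and tail terms yields
\begin{align}
\bigl|\bra{\Psi^{(N)}}S_N\ket{\Psi^{(N)}}-\bra{\Psi_\Delta^{(N)}}S_N\ket{\Psi_\Delta^{(N)}}\bigr|\leq 3C\sqrt{E_0/\Delta},
\end{align}
again uniform in $N$, so for large $\Delta$ the full correlator is uniformly close to its truncation.

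Next I would treat the truncation. By Fact 2 only $K=O(1)$ eigenstates lie below $\Delta$, so the coefficients $c_{AB}^{(N)}$ entering $\ket{\Psi_\Delta^{(N)}}$ form a bounded family in a fixed finite-dimensional space. Bolzano--Weierstrass, combined with a diagonal argument over an increasing sequence of cutoffs, produces a single subsequence $N_k$ along which $c_{AB}^{(N_k)}\to c_{AB}^0$ for every low-energy pair $(A,B)$. Along this subsequence the truncated correlator differs from the one obtained by freezing the coefficients to their limits by a finite sum of at most $K^2$ terms, each controlled by $C\,|c^{(N_k)}-c^0|$ and hence negligible; Fact 1 then identifies the limit of the frozen-coefficient correlator as $\bra{\Psi_\Delta}S\ket{\Psi_\Delta}$, with $\ket{\Psi_\Delta}=\sum_{E<\Delta}c_{AB}^0\ket{A}\ket{B}$ \emph{pure}. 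I would then assemble the global state $\ket{\Psi}=\sum_{A,B}c_{AB}^0\ket{A}\ket{B}$; a Fatou-type estimate $\|\ket{\Psi_\Delta}\|^2=\sum_{E<\Delta}|c_{AB}^0|^2\leq 1$ shows it is normalizable, and dominated convergence (using boundedness of $S$) gives $\bra{\Psi_\Delta}S\ket{\Psi_\Delta}\to\bra{\Psi}S\ket{\Psi}$ as $\Delta\to\infty$. Chaining the uniform truncation bound with this convergence and letting $\Delta\to\infty$ shows $\bra{\Psi^{(N_k)}}S_{N_k}\ket{\Psi^{(N_k)}}\to\bra{\Psi}S\ket{\Psi}$ along the subsequence, and Assumption 1 upgrades this to convergence of the full sequence to the same pure-state value.

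The main obstacle is the interchange of the $N\to\infty$ and $\Delta\to\infty$ limits: a priori the mixedness one is trying to rule out could hide in the high-energy tail and survive as $N\to\infty$. What rescues the argument is precisely the uniform-in-$N$ tail bound of the first step, available only because the energy is bounded (Assumption 2); above the Hawking--Page temperature no such uniform domination exists, which is exactly where the tail can carry surviving correlations and the argument would break. A secondary, milder subtlety is the consistency of the subsequential construction, but this is settled by Assumption 1, which forces a unique full-sequence limit that any subsequential pure state must reproduce.
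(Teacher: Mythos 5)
Your proposal is correct and follows essentially the same route as the paper: a uniform-in-$N$ truncation bound via Markov and Cauchy--Schwarz, diagonal extraction of convergent coefficients, control of the finitely many (Fact 2) low-energy terms to invoke Fact 1, an $\varepsilon/3$ chaining of the $\Delta\to\infty$ and $N\to\infty$ limits, and finally uniqueness of the limit from Assumption 1 to upgrade subsequential convergence to full convergence. The only cosmetic difference is in assembling the global state $\ket{\Psi}$: you define it directly and use a Fatou-type bound for normalizability, whereas the paper shows the family $(\ket{\Psi_\Delta})$ is Cauchy and appeals to completeness of the large $N$ Hilbert space — these are interchangeable.
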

\section{Proof}
\label{sec:Proof}
This section presents the proof of Theorem \ref{thm:main}. In order to make it as easy to read as possible, the steps are broken down into several subsections. The end of this section also offers some comments on why this proof works, and the mathematical difference between the setup presented in this paper and some other cases of interest in AdS/CFT.

\subsection{Step 1: bound the error made by finite $N$ truncations}
The first step of the proof, inspired by \cite{AntRat24}, is to bound the error made by truncating the finite $N$ state $\ket{\Psi^{(N)}}$ to a microcanonical window of maximum energy $\Delta$ independent of $N$, which is done in the following lemma. Note that a closely related result is established in \cite{EngGes25a}. In this lemma, the norm on states is the norm on linear functionals (also known as the 1-norm):
\begin{align}
\|\psi_1-\psi_2\|:=\underset{\|X\|=1}{\mathrm{sup}}\lvert\bra{\Psi_1}X\ket{\Psi_1}-\bra{\Psi_2}X\ket{\Psi_2}\rvert.
\end{align}
\begin{lem}
Let $\varepsilon>0$. There exists $\Delta>0$ such that for all $N$,
\begin{align}
    \|\psi_\Delta^{(N)}-\psi^{(N)}\|<\varepsilon.
\end{align}
\label{lem:truncate}
\end{lem}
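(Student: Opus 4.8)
The plan is to use Assumption 2 directly: the high-energy tail of the state is controlled by its total energy through Markov's inequality, and since that energy is bounded by the $N$-independent constant $E_0$, the truncation error will be small uniformly in $N$ once $\Delta$ is large. The only subtlety is that every estimate must be made manifestly independent of $N$.

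First I would observe that $\ket{\Psi_\Delta^{(N)}}=P_\Delta\ket{\Psi^{(N)}}$, where $P_\Delta$ is the orthogonal projector onto the span of the energy eigenstates below the cutoff $\Delta$, so that the discarded piece is $(1-P_\Delta)\ket{\Psi^{(N)}}$. Normalizing the two-sided Hamiltonian $H=H_L+H_R$ so that $H\geq 0$, Assumption 2 reads $\bra{\Psi^{(N)}}H\ket{\Psi^{(N)}}\leq E_0$, and a one-line Markov estimate bounds the tail weight,
\begin{align}
\big\|(1-P_\Delta)\ket{\Psi^{(N)}}\big\|^2=\sum_{E\geq\Delta}\big|c_{AB}^{(N)}\big|^2\leq\frac{1}{\Delta}\sum_{E\geq\Delta}E\,\big|c_{AB}^{(N)}\big|^2\leq\frac{\bra{\Psi^{(N)}}H\ket{\Psi^{(N)}}}{\Delta}\leq\frac{E_0}{\Delta}.
\end{align}
The crucial point is that this is uniform in $N$ precisely because $E_0$ does not depend on $N$, giving $\big\|\ket{\Psi^{(N)}}-\ket{\Psi_\Delta^{(N)}}\big\|\leq\sqrt{E_0/\Delta}$ for every $N$.

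Next I would convert this Hilbert-space bound on the vectors into a bound on the $1$-norm of the associated functionals. Writing $\ket{\delta^{(N)}}:=\ket{\Psi^{(N)}}-\ket{\Psi_\Delta^{(N)}}$ and using that both $\ket{\Psi^{(N)}}$ and $\ket{\Psi_\Delta^{(N)}}=P_\Delta\ket{\Psi^{(N)}}$ have norm at most one, for any observable $X$ with $\|X\|=1$ I would add and subtract a cross term,
\begin{align}
\bra{\Psi^{(N)}}X\ket{\Psi^{(N)}}-\bra{\Psi_\Delta^{(N)}}X\ket{\Psi_\Delta^{(N)}}=\bra{\delta^{(N)}}X\ket{\Psi^{(N)}}+\bra{\Psi_\Delta^{(N)}}X\ket{\delta^{(N)}},
\end{align}
and bound each term by Cauchy--Schwarz together with submultiplicativity of the operator norm. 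This yields $\|\psi_\Delta^{(N)}-\psi^{(N)}\|\leq 2\,\big\|\ket{\delta^{(N)}}\big\|\leq 2\sqrt{E_0/\Delta}$, so choosing any $\Delta>4E_0/\varepsilon^2$ gives the claim simultaneously for all $N$.

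There is no genuinely hard step here; the content is elementary and the entire force of the argument is carried by the $N$-independence of $E_0$ in Assumption 2, which is what makes the resulting $\Delta$ uniform. The one modeling point to pin down is the precise meaning of the window $E<\Delta$ (whether it constrains $E_A+E_B$ or $E_A$ and $E_B$ separately); in either reading the Markov step goes through with the total energy $H=H_L+H_R\geq 0$, so I would fix one convention and note the bound is insensitive to the choice.
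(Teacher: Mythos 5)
Your proposal is correct and follows essentially the same route as the paper's proof: Markov's inequality applied to the $N$-independent energy bound $E_0$ from Assumption~2 to control the tail weight by $E_0/\Delta$, followed by an add-and-subtract cross-term decomposition bounded via the triangle and Cauchy--Schwarz inequalities, yielding the same uniform bound $2\sqrt{E_0/\Delta}$. The only differences are cosmetic (you make the projector $P_\Delta$ and the choice $\Delta>4E_0/\varepsilon^2$ explicit, and phrase the estimate directly for $\|X\|=1$), so there is nothing to correct.
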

\begin{proof}
For $\Delta>0$, by using Markov's inequality and our Assumption 2, we obtain
\begin{align}
\|\ket{\Psi^{(N)}_\Delta}-\ket{\Psi^{(N)}}\|^2=\mathbb{P}(E>\Delta)\leq\frac{E_0}{\Delta}.
\end{align}
Then, for $S_N$ a bounded operator on the finite $N$ Hilbert space, by the triangle inequality followed by the Cauchy--Schwarz inequality,
\begin{align}
\nonumber\lvert\bra{\Psi^{(N)}_\Delta}S_N\ket{\Psi_\Delta^{(N)}}-\bra{\Psi^{(N)}}S_N\ket{\Psi^{(N)}}\rvert&\leq \lvert\bra{\Psi^{(N)}_\Delta}S_N\ket{\Psi_\Delta^{(N)}}-\bra{\Psi^{(N)}}S_N\ket{\Psi^{(N)}_\Delta}\rvert+\\&\lvert\bra{\Psi^{(N)}}S_N\ket{\Psi^{(N)}_\Delta}-\bra{\Psi^{(N)}}S_N\ket{\Psi^{(N)}}\rvert\\
\leq 2\|S_N\|\sqrt{\frac{E_0}{\Delta}}.
\end{align}
Choosing $\Delta$ large enough, we get 
\begin{align}
\lvert\bra{\Psi^{(N)}_\Delta}S_N\ket{\Psi_\Delta^{(N)}}-\bra{\Psi^{(N)}}S_N\ket{\Psi^{(N)}}\rvert\leq\varepsilon\|S_N\|,
\end{align}
which proves the lemma.
\end{proof}
\subsection{Step 2: Cantor's diagonal extraction}

One main difficulty is that one cannot a priori exclude the case where the coefficients $c_{AB}^{(N)}$ do not converge to a fixed value as $N\rightarrow\infty$ (see \cite{Liu25} for more on this.) However, there is a weaker statement that is true, which is that there will always exist a subset of values of $N$, $\{\phi(N)\}_{N\in\mathbb{N}}$ such that the $c_{AB}^{\phi(N)}$ all converge. This is due to a classic technical trick in analysis known as Cantor's diagonal extraction procedure. This trick is introduced here, so that it can be used in the next step.

\begin{lem}[Cantor's diagonal extraction]
Let $(c^{(N)}_{AB})$ be a family of sequences of complex numbers, each bounded in $N$, indexed by $AB$ where $AB$ runs over a countable set. Then, there exists an extractor function $\phi(N)$ such that for all $AB$, the sequence $(c^{\phi(N)}_{AB})$ is convergent as $N\rightarrow\infty$.
\label{lem:cantor}
\end{lem}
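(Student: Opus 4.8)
The plan is to run the classical diagonal extraction argument. First I would fix a bijection between the countable index set and $\mathbb{N}$, enumerating the indices as $(AB)_1,(AB)_2,\dots$, so that the family becomes a countable list of complex sequences, each bounded in $N$. Since each such sequence takes values in a bounded subset of $\mathbb{C}\cong\mathbb{R}^2$, the Bolzano--Weierstrass theorem applies to every one of them, and this is the only analytic input the proof requires.

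The core is an inductive construction of a nested family of extractors. For the first index, boundedness of $(c^{(N)}_{(AB)_1})$ yields, by Bolzano--Weierstrass, a strictly increasing $\psi_1\colon\mathbb{N}\to\mathbb{N}$ along which it converges; set $\varphi_1:=\psi_1$. Assuming strictly increasing $\psi_1,\dots,\psi_{k-1}$ have been built so that $\varphi_{k-1}:=\psi_1\circ\cdots\circ\psi_{k-1}$ makes the first $k-1$ enumerated sequences convergent, I would apply Bolzano--Weierstrass once more to the bounded sequence $(c^{\varphi_{k-1}(N)}_{(AB)_k})$ to obtain a strictly increasing $\psi_k$, and put $\varphi_k:=\varphi_{k-1}\circ\psi_k$. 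Then $\varphi_k$ is a subsequence of $\varphi_{k-1}$ along which the first $k$ enumerated sequences all converge: the first $k-1$ because any subsequence of a convergent sequence converges to the same limit, and the $k$-th by the choice of $\psi_k$.

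Finally I would take the diagonal, setting $\phi(N):=\varphi_N(N)$. The only real subtlety---and hence the step I would treat most carefully---is verifying that $\phi$ is a genuine strictly increasing extractor and that $(c^{\phi(N)}_{(AB)_k})$ converges for each fixed $k$. Strict monotonicity follows from the fact that every $\psi_j$, hence every $\varphi_N$, is strictly increasing: since $\psi_{N+1}(N)\ge N$ one gets $\varphi_{N+1}(N)=\varphi_N(\psi_{N+1}(N))\ge\varphi_N(N)$, and combining with $\varphi_{N+1}(N+1)>\varphi_{N+1}(N)$ gives $\phi(N+1)>\phi(N)$. For convergence at a fixed index $k$, I would note that for every $N\ge k$ one has $\varphi_N=\varphi_k\circ\psi_{k+1}\circ\cdots\circ\psi_N$, so $\phi(N)=\varphi_N(N)$ lies in the range of $\varphi_k$; writing $\phi(N)=\varphi_k(m_N)$, injectivity of $\varphi_k$ together with strict monotonicity of $\phi$ forces $m_N$ to be strictly increasing. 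Hence the tail $(c^{\phi(N)}_{(AB)_k})_{N\ge k}$ is a subsequence of the convergent sequence $(c^{\varphi_k(M)}_{(AB)_k})_M$, so it converges, and discarding the finitely many terms with $N<k$ does not affect this. As $k$ was arbitrary, $(c^{\phi(N)}_{AB})$ converges for every index $AB$, which is precisely the claim.
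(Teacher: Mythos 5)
Your proposal is correct and follows essentially the same route as the paper's proof: enumerate the countable index set, build nested extractors inductively via Bolzano--Weierstrass, and take the diagonal $\phi(N):=\varphi_N(N)$. The only difference is one of rigor, in your favor: you explicitly verify strict monotonicity of the diagonal and that each tail $(c^{\phi(N)}_{(AB)_k})_{N\ge k}$ is a genuine subsequence of the $k$-th extracted sequence, a step the paper compresses into ``by construction, all the $(c_{AB}^{\phi(N)})$ converge.''
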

\begin{proof}
First, fix $AB$. Enumerate the indices $AB$ with an integer-valued bijection $k(AB)$. Since the $c_{k^{-1}(1)}^{(N)}$ are bounded, there exists an extractor function $\phi_1(N)$ for which the $c_{k^{-1}(1)}^{\phi_1(N)}$ converge. The process can be repeated: there exists a subset $\{\phi_2(N)\}$ of the $\{\phi_1(N)\}$ such that the $c_{k^{-1}(2)}^{\phi_2(N)}$ also converge. Iteratively, for all $n\in\mathbb{N}$ there exists an extractor function $\phi_{n}(N)$ such that the $c_{k^{-1}(1)}^{\phi_{n}(N)},\dots,c_{k^{-1}(n)}^{\phi_{n}(N)}$ converge.  We now define the extractor function \begin{align}
  \phi(N):=\phi_{N}(N).  
\end{align}
By construction, all the $(c_{AB}^{\phi(N)})$ converge as $N\rightarrow\infty$.
\end{proof}

\subsection{Step 3: construct a limit for some of the truncated states}

Now, we can use Cantor's diagonal extraction procedure to construct a subset $\{\phi(N)\}_{N\in\mathbb{N}}$ of values of $N$ for which all the $c_{AB}^{\phi(N)}$ converge. In general, it would be hard to deduce from this simple trick that the correlation functions in the states $\ket{\Psi^{\phi(N)}}$ get close to the ones in a state in which the $c_{AB}^{\phi(N)}$ have been replaced by their limits. Here however, at least for states that are truncated to a microcanonical window of finite size, there are only finitely many $c_{AB}^{\phi(N)}$ to control, so that this statement is actually true, as shown by the following lemma. Note that the fact that this step works here, whereas it would fail in many other cases of interest in AdS/CFT, is crucial. This will be discussed in more detail at the end of this section.

\begin{lem}
Let $\Delta>0.$ There exists an extractor function $\phi(N)$ and a pure state $\ket{\Psi_\Delta}$ in the large $N$ vacuum sector such that for all uniformly bounded sequences of finite $N$ single trace operators $(S_N)$ asymptoting to a bounded operator $S$ on the large $N$ vacuum sector, 
\begin{align}
\bra{\Psi_\Delta^{\phi(N)}}S_{\phi(N)}\ket{\Psi_\Delta^{\phi(N)}}\underset{N\rightarrow\infty}{\longrightarrow}\bra{\Psi_\Delta}S\ket{\Psi_\Delta}.
\end{align}
\label{lem:extract}
\end{lem}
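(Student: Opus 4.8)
The plan is to exploit the finiteness of the truncated window, which is the single feature that distinguishes this setting from the generic large $N$ situation. By Fact 2, for a generic choice of $\Delta$ (which I may assume, since generic values are dense and enlarging $\Delta$ only sharpens Lemma \ref{lem:truncate}) the number of energy eigenstates below $\Delta$ stabilizes to a fixed finite value for all large $N$, so the index set appearing in $\ket{\Psi_\Delta^{(N)}}$ is a fixed finite set, of size bounded by a constant independent of $N$. Since $\ket{\Psi^{(N)}}$ is normalized, $\sum_{A,B}\abs{c_{AB}^{(N)}}^2\le 1$, so each $\abs{c_{AB}^{(N)}}\le 1$ and the family $(c_{AB}^{(N)})_{E<\Delta}$ is uniformly bounded in $N$. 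I would first apply Cantor's diagonal extraction (Lemma \ref{lem:cantor}) to this finite family to obtain an extractor $\phi(N)$ with $c_{AB}^{\phi(N)}\to c_{AB}^0$ for every index in the window.

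With the limiting coefficients in hand, I would define the candidate limit directly in the large $N$ vacuum sector as $\ket{\Psi_\Delta}=\sum_{E<\Delta}c_{AB}^0\ket{A}\ket{B}$, which by Fact 1 is a pure state (up to normalization). Introducing the auxiliary finite $N$ state built from the frozen coefficients, $\ket{\tilde\Psi_\Delta^{(N)}}:=\sum_{E<\Delta}c_{AB}^0\ket{A^{(N)}}\ket{B^{(N)}}$, Fact 1 applied to these constant coefficients gives $\bra{\tilde\Psi_\Delta^{(N)}}S_N\ket{\tilde\Psi_\Delta^{(N)}}\to\bra{\Psi_\Delta}S\ket{\Psi_\Delta}$. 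It then remains only to control the discrepancy between the extracted state and this frozen-coefficient state.

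For this last step I would write $\ket{\delta^{(N)}}:=\ket{\Psi_\Delta^{\phi(N)}}-\ket{\tilde\Psi_\Delta^{\phi(N)}}$ and split the correlator difference bilinearly,
\begin{align}
&\bra{\Psi_\Delta^{\phi(N)}}S_{\phi(N)}\ket{\Psi_\Delta^{\phi(N)}}-\bra{\tilde\Psi_\Delta^{\phi(N)}}S_{\phi(N)}\ket{\tilde\Psi_\Delta^{\phi(N)}}\nonumber\\
&\qquad=\bra{\delta^{(N)}}S_{\phi(N)}\ket{\Psi_\Delta^{\phi(N)}}+\bra{\tilde\Psi_\Delta^{\phi(N)}}S_{\phi(N)}\ket{\delta^{(N)}},
\end{align}
and bound each piece by Cauchy--Schwarz using the uniform boundedness $\|S_N\|\le C$ supplied by Assumption 1. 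The two surviving state norms are bounded by $1$, so the whole difference is controlled by $2C\,\|\delta^{(N)}\|$, and the crucial point is that
\begin{align}
\|\delta^{(N)}\|^2=\sum_{E<\Delta}\bigl\lvert c_{AB}^{\phi(N)}-c_{AB}^0\bigr\rvert^2\underset{N\rightarrow\infty}{\longrightarrow}0,
\end{align}
since this is a \emph{finite} sum of terms each tending to zero. Combining with the previous paragraph through the triangle inequality yields the claimed convergence along $\phi(N)$.

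The main obstacle is conceptual rather than computational: it is precisely the finiteness guaranteed by Fact 2 that forces $\|\delta^{(N)}\|\to 0$. The pointwise coefficient convergence delivered by Cantor extraction would not on its own imply norm convergence of the states if the window contained a growing or infinite number of terms, and Fact 1 only yields the limit once the coefficients are frozen. The low-energy truncation is exactly what converts pointwise control of the coefficients into operator-norm control of the states, and this is the step that would break down in most other large $N$ settings where the relevant windows are not uniformly finite.
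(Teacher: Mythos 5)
Your proposal is correct and follows essentially the same route as the paper: Cantor extraction on the finitely many coefficients in the window, the frozen-coefficient auxiliary state (your $\ket{\tilde\Psi_\Delta^{(N)}}$ is the paper's $\ket{\Psi_{\Delta,0}^{(N)}}$), Fact 1 for its convergence to a pure $\ket{\Psi_\Delta}$, and a triangle-inequality splitting whose first term vanishes because the window is finite (Fact 2) and $\|S_N\|$ is uniformly bounded. Your only deviation is bounding that term by Cauchy--Schwarz against $\|\delta^{(N)}\|$ rather than expanding matrix elements in the eigenbasis as the paper does, which is a cosmetic difference, not a different argument.
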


\begin{proof}
The $c_{AB}^{(N)}$ are complex numbers upper-bounded in norm by 1. Therefore we can apply the previous lemma to the $c_{AB}^{(N)}$. We construct an extractor $\phi(N)$ such that all the $c_{AB}^{\phi(N)}$ converge in the large $N$ limit to a value $c_{AB}^{0}$. We now define the finite $N$ states
\begin{align}
\ket{\Psi_{\Delta,0}^{(N)}}:=\sum_{E<\Delta}c_{AB}^{0}\ket{A^{(N)}}\ket{B^{(N)}}.
\end{align}
By Fact 1 recalled in the beginning of this section, correlation functions in this state have a well-defined limit, described by some pure state (up to normalization) $\ket{\Psi_\Delta}$ in the large $N$ vacuum sector.
For $(S_N)$ a uniformly bounded sequence of vacuum sector observables described by an infinite $N$ bounded operator $S$, we now estimate (using the notation $\psi(X)$ for $\bra{\Psi}X\ket{\Psi}$):
\begin{align}
\label{eq:convdelta}
|\psi^{\phi(N)}_\Delta(S_{\phi(N)})-\psi_\Delta(S)\rvert\leq\lvert\psi^{\phi(N)}_\Delta(S_{\phi(N)})-\psi_{\Delta,0}^{\phi(N)}(S_{\phi(N)})\rvert+\lvert\psi_{\Delta,0}^{\phi(N)}(S_{\phi(N)})-\psi_{\Delta}(S)\rvert.
\end{align}
Let us estimate the first term.
\begin{align}
\lvert\psi_\Delta^{{\phi(N)}}(S_{\phi(N)})-\psi_{\Delta,0}^{{\phi(N)}}(S_{\phi(N)})\rvert&=\lvert\sum_{E(AB,A^\prime B^\prime)<\Delta}(c_{AB}^{{\phi(N)}\ast}c_{A^\prime B^\prime}^{{\phi(N)}}-c_{AB}^{0\ast}c_{A^\prime B^\prime}^{0})\bra{AB^{\phi(N)}}S_{\phi(N)}\ket{A^\prime B^{\prime{\phi(N)}}}\rvert\\&\leq\|S_{\phi(N)}\|\sum_{E(AB,A^\prime B^\prime)<\Delta}\lvert(c_{AB}^{\phi(N)\ast}c_{A^\prime B^\prime}^{\phi(N)}-c_{AB}^{0\ast}c_{A^\prime B^\prime}^{0})\rvert.
\end{align}
Since $\|S_{\phi(N)}\|$ is uniformly bounded, each term of this sum goes to zero. Since the sum is finite, and the number of terms does not depend on $N$ by Fact 2, we further conclude that this expression goes to zero as $N\rightarrow\infty$. Therefore the first term of \eqref{eq:convdelta} goes to zero in the large $N$ limit. By definition of $\ket{\Psi_\Delta}$, the second term also goes to zero, which concludes the proof of the lemma.
\end{proof}

\subsection{Step 4: construct a limit for some of the untruncated states}

At this stage of the proof, there is an obvious candidate for a potential pure state limit of the $\ket{\Psi^{\phi(N)}}$ in the large $N$ vacuum sector: the limit of the $\ket{\Psi_\Delta}$. It remains to be shown that this limit exists, which is done by the next lemma using the Cauchy criterion.

\begin{lem}
The family $(\ket{\Psi_\Delta})$ is Cauchy for the norm of the large $N$ vacuum Hilbert space.
\label{lem:cauchy}
\end{lem}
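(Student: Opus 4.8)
The plan is to verify the Cauchy criterion directly, by reducing the Hilbert-space distance $\|\ket{\Psi_\Delta}-\ket{\Psi_{\Delta'}}\|$ to a tail of a convergent series of coefficients and then controlling that tail uniformly via Assumption 2. Recall from Fact 1 that each limit state has the explicit form $\ket{\Psi_\Delta}=\sum_{E<\Delta}c_{AB}^0\ket{A}\ket{B}$ in the large $N$ vacuum sector, where the $\ket{A}\ket{B}$ are the orthonormal energy eigenstates of the two-copy large $N$ vacuum and the $c_{AB}^0$ are the limits of $c_{AB}^{\phi(N)}$ along the extractor $\phi$ produced in Lemma \ref{lem:extract}. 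Crucially, the diagonal extraction of Lemma \ref{lem:cantor} yields a \emph{single} $\phi$ handling all coefficients at once, so the same $\phi$ serves every $\Delta$ and the whole family $(\ket{\Psi_\Delta})$ sits in one Hilbert space sharing a common orthonormal basis.

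First I would fix $\Delta<\Delta'$ without loss of generality and use orthonormality to reduce the distance to an energy-window sum,
\begin{align}
\|\ket{\Psi_{\Delta'}}-\ket{\Psi_\Delta}\|^2=\Big\|\sum_{\Delta\le E<\Delta'}c_{AB}^0\ket{A}\ket{B}\Big\|^2=\sum_{\Delta\le E<\Delta'}\lvert c_{AB}^0\rvert^2,
\end{align}
which is a tail, restricted to the window $[\Delta,\Delta')$, of the series $\sum_{AB}\lvert c_{AB}^0\rvert^2$. The problem thus reduces to showing these tails vanish as $\Delta\to\infty$, uniformly in $\Delta'$.

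Next I would pass back to finite $N$. By Fact 2 the window $[\Delta,\Delta')$ contains only finitely many eigenstates, and (for generic $\Delta,\Delta'$) this count is $N$-independent, so the finite sum commutes with $N\to\infty$:
\begin{align}
\sum_{\Delta\le E<\Delta'}\lvert c_{AB}^0\rvert^2=\lim_{N\rightarrow\infty}\sum_{\Delta\le E<\Delta'}\lvert c_{AB}^{\phi(N)}\rvert^2\le\lim_{N\rightarrow\infty}\sum_{E\ge\Delta}\lvert c_{AB}^{\phi(N)}\rvert^2=\lim_{N\rightarrow\infty}\mathbb{P}(E\ge\Delta).
\end{align}
Since each $\ket{\Psi^{(N)}}$ is normalized, $\sum_{E\ge\Delta}\lvert c_{AB}^{\phi(N)}\rvert^2=\mathbb{P}(E\ge\Delta)$, and Markov's inequality together with Assumption 2 bounds this by $E_0/\Delta$ uniformly in $N$, exactly as in the proof of Lemma \ref{lem:truncate}. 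Combining the two displays gives $\|\ket{\Psi_{\Delta'}}-\ket{\Psi_\Delta}\|^2\le E_0/\Delta$, which tends to $0$ as $\Delta\to\infty$ independently of $\Delta'$; this is precisely the Cauchy criterion.

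The main obstacle is justifying the interchange of the limit with the sum in the second display: it is legitimate here only because Fact 2 guarantees that each low-energy window holds a finite, $N$-stable number of states, so that pointwise convergence of the finitely many $c_{AB}^{\phi(N)}$ is enough. This is the same finiteness-of-the-low-energy-spectrum (dominated-convergence-type) mechanism the paper emphasizes, which would break down above the Hawking--Page temperature where the number of contributing states proliferates. A secondary point to state carefully is that the reduction in the first display genuinely requires both the explicit vector form of Fact 1 and the orthonormality of the two-copy vacuum eigenstates, so that it is the Hilbert-space norm — not merely the functional $1$-norm of Lemma \ref{lem:truncate} — that is being controlled.
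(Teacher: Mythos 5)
Your proof is correct and takes essentially the same approach as the paper's: both reduce $\|\ket{\Psi_{\Delta'}}-\ket{\Psi_\Delta}\|^2$ to the window sum $\sum_{\Delta\le E<\Delta'}\lvert c_{AB}^0\rvert^2$ via the explicit form in Fact 1, pass back to finite $N$ using the $N$-stable finiteness of the low-energy window (Fact 2), and bound the result by $E_0/\Delta$ via Markov's inequality and Assumption 2. The only cosmetic difference is that you interchange the large $N$ limit with the finite coefficient sum directly, whereas the paper routes the identical estimate through a three-term triangle inequality involving the auxiliary finite-$N$ states $\ket{\Psi_{\Delta,0}^{\phi(N)}}$.
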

\begin{proof}
Let $\Delta_0>\Delta_1>0$. By Fact 1, for $N$ large enough,
\begin{align}
\|\ket{\Psi_{\Delta_0}} - \ket{\Psi_{\Delta_1}}\|^2 
= \sum_{\Delta_1 < E < \Delta_0} \lvert c_{AB}^0 \rvert^2 
= \|\ket{\Psi_{\Delta_0,0}^{\phi(N)}} - \ket{\Psi_{\Delta_1,0}^{\phi(N)}}\|^2 .
\end{align}
Therefore we can bound the left hand side of this equality by studying its right hand side. By the triangle inequality,
\begin{align}
\|\ket{\Psi_{\Delta_0,0}^{\phi(N)}}-\ket{\Psi_{\Delta_1,0}^{\phi(N)}}\|\leq \|\ket{\Psi_{\Delta_0,0}^{\phi(N)}}-\ket{\Psi_{\Delta_0}^{\phi(N)}}\|+\|\ket{\Psi_{\Delta_0}^{\phi(N)}}-\ket{\Psi_{\Delta_1}^{\phi(N)}}\|+\|\ket{\Psi_{\Delta_1}^{\phi(N)}}-\ket{\Psi_{\Delta_1,0}^{\phi(N)}}\|.
\end{align}
By a similar argument to the one from the previous proof, the first and third term go to zero in the large $N$ limit. For the second term, by Markov's inequality,
\begin{align}
\|\ket{\Psi_{\Delta_0}^{\phi(N)}}-\ket{\Psi_{\Delta_1}^{\phi(N)}}\|^2\leq \mathbb{P}(E>\Delta_1)\leq \frac{E_0}{\Delta_1}.
\end{align}
We conclude that the $\ket{\Psi_{\Delta}}$ form a Cauchy family from taking $N\rightarrow\infty$.
\end{proof}

Now that the existence of the limit $\ket{\Psi}$ of the $\ket{\Psi_{\Delta}}$ as $\Delta\rightarrow\infty$, as a vector in the large $N$ vacuum Hilbert space, has been established, it remains to be shown that correlation functions in the states $\ket{\Psi^{\phi(N)}}$ actually converge towards those in $\ket{\Psi}$.

\begin{lem}
For $(S_N)$ a uniformly bounded sequence of finite $N$ observables asymptoting to a bounded vacuum sector observable $S$,
\begin{align}\bra{\Psi^{\phi(N)}}S_{\phi(N)}\ket{\Psi^{\phi(N)}}\rightarrow \bra{\Psi}S\ket{\Psi}.\label{eq:subconv}\end{align}
\end{lem}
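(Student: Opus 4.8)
The plan is to establish \eqref{eq:subconv} by an $\varepsilon/3$ argument that chains together the three preceding lemmas, using the fixed-size truncations $\ket{\Psi_\Delta^{\phi(N)}}$ and their large $N$ limits $\ket{\Psi_\Delta}$ as intermediaries. Fix $\varepsilon>0$ and set $M:=\sup_N\|S_N\|$, which is finite by uniform boundedness. Writing $\psi^{\phi(N)}(X):=\bra{\Psi^{\phi(N)}}X\ket{\Psi^{\phi(N)}}$ as before, the triangle inequality gives
\begin{align}
\lvert\psi^{\phi(N)}(S_{\phi(N)})-\psi(S)\rvert
&\leq\lvert\psi^{\phi(N)}(S_{\phi(N)})-\psi_\Delta^{\phi(N)}(S_{\phi(N)})\rvert\\
&\quad+\lvert\psi_\Delta^{\phi(N)}(S_{\phi(N)})-\psi_\Delta(S)\rvert\\
&\quad+\lvert\psi_\Delta(S)-\psi(S)\rvert,
\end{align}
and I would bound each of the three terms by $\varepsilon/3$.

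First I would fix $\Delta$ once and for all, large enough to control the first and third terms simultaneously and uniformly in $N$. The first term is the finite $N$ truncation error: by Lemma \ref{lem:truncate} applied with tolerance $\varepsilon/(3M)$, there is a $\Delta$ for which it is at most $(\varepsilon/(3M))\|S_{\phi(N)}\|\leq\varepsilon/3$ for every $N$. For the third term I would use that $\ket{\Psi_\Delta}\to\ket{\Psi}$ in the norm of the large $N$ vacuum Hilbert space (Lemma \ref{lem:cauchy}), together with the bilinear estimate $\lvert\psi_\Delta(S)-\psi(S)\rvert\leq\|S\|\,(\|\ket{\Psi_\Delta}\|+\|\ket{\Psi}\|)\,\|\ket{\Psi_\Delta}-\ket{\Psi}\|$; since $\|\ket{\Psi_\Delta}\|^2=\sum_{E<\Delta}\lvert c_{AB}^0\rvert^2\leq 1$ and likewise $\|\ket{\Psi}\|\leq 1$, enlarging $\Delta$ if necessary makes this term at most $\varepsilon/3$. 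The structural point that legitimizes freezing a single $\Delta$ is that the extractor $\phi$ built in Lemma \ref{lem:extract} is one subsequence obtained by diagonalizing over all coefficients $c_{AB}^{(N)}$ at once, so the same $\phi$ realizes the convergence $\psi_\Delta^{\phi(N)}(S_{\phi(N)})\to\psi_\Delta(S)$ for every $\Delta$ simultaneously. With $\Delta$ now fixed, Lemma \ref{lem:extract} makes the middle term tend to $0$ as $N\to\infty$, hence $\leq\varepsilon/3$ for all large $N$. Combining the three bounds yields $\lvert\psi^{\phi(N)}(S_{\phi(N)})-\psi(S)\rvert\leq\varepsilon$ for $N$ large, and since $\varepsilon$ is arbitrary this is \eqref{eq:subconv}.

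The main obstacle is the interchange of the two limits $N\to\infty$ and $\Delta\to\infty$: the convergence supplied by Lemma \ref{lem:extract} holds only at fixed $\Delta$, whereas the candidate limit $\ket{\Psi}$ lives at $\Delta\to\infty$. This is precisely where the uniformity of Lemma \ref{lem:truncate} is indispensable, since it bounds the truncation error by a constant independent of $N$; this dominated-convergence-type input is what allows $\Delta$ to be frozen \emph{before} the $N\to\infty$ limit is taken, thereby decoupling the two limits. I would also make sure the normalizations stay controlled throughout, so that the bilinear estimate for the third term is uniform and the $\varepsilon/3$ scheme closes cleanly.
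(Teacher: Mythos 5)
Your proof is correct and follows essentially the same route as the paper's: the identical three-term triangle-inequality decomposition, with Lemma \ref{lem:truncate} (uniform in $N$) controlling the truncation error, Lemma \ref{lem:extract} controlling the middle term at fixed $\Delta$, and Lemma \ref{lem:cauchy} controlling the tail, with $\Delta$ frozen before taking $N\to\infty$. Your explicit remark that the extractor $\phi$ is $\Delta$-independent (being built by diagonalizing over all the $c_{AB}^{(N)}$ at once), and your explicit bilinear estimate for the third term, are clarifications the paper leaves implicit, but they do not change the argument.
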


\begin{proof}
Since by Lemma \ref{lem:cauchy} the family $(\ket{\Psi_\Delta})$ is Cauchy for the Hilbert space norm and the large $N$ vacuum sector of states is complete (as a Hilbert space), this family admits a limit $\ket{\Psi}$. Now let $\varepsilon>0$, and let $(S_N)$ be a uniformly bounded sequence of operators that asymptotes to a bounded operator $S$ in the vacuum sector. We are interested in bounding the quantity $\lvert\psi^{\phi(N)}(S_{\phi(N)})-\psi(S)\rvert$ (once again using the notation $\psi(X)$ for $\bra{\Psi}X\ket{\Psi}$). By the triangle inequality, for all choices of $\Delta$,
\begin{align}
\nonumber\lvert\psi^{\phi(N)}(S_{\phi(N)})-\psi(S)\rvert&\leq\lvert\psi^{\phi(N)}(S_{\phi(N)})-\psi_\Delta^{\phi(N)}(S_{\phi(N)})\rvert\\&+\lvert\psi_\Delta^{\phi(N)}(S_{\phi(N)})-\psi_\Delta(S)\rvert+\lvert\psi_\Delta(S)-\psi(S)\rvert.
\end{align}
By Lemma \ref{lem:truncate}, for $\Delta$ larger than some finite value $\Delta_0$ (fixed with $N$), the first term is upper bounded by $\varepsilon/3$. Moreover by Lemma \ref{lem:extract}, for $N$ larger than some $N_0$, the second term is upper bounded by $\varepsilon/3$. Finally by Lemma \ref{lem:cauchy}, for $\Delta>\Delta_1$ fixed, the third term is upper bounded by $\varepsilon/3$. Choosing $\Delta>\mathrm{max}(\Delta_0,\Delta_1)$, we obtain that for $N>N_0$,
\begin{align}
\lvert\psi^{\phi(N)}(S_{\phi(N)})-\psi(S)\rvert&\leq\varepsilon.
\end{align}
This shows 
\begin{align}\psi^{\phi(N)}(S_{\phi(N)})\rightarrow \psi(S).\end{align}
\end{proof}

\subsection{Step 5: exploit the uniqueness of the lmit}

So far only the convergence of the subsequence $\bra{\Psi^{\phi(N)}}S_{\phi(N)}\ket{\Psi^{\phi(N)}}$ has been established. But by Assumption 1, the sequence $\bra{\Psi^{(N)}}S_{N}\ket{\Psi^{(N)}}$ (without the extractor) admits a limit as $N\rightarrow\infty$. By uniqueness of the limit we conclude:
\begin{align}\bra{\Psi^{(N)}}S_N\ket{\Psi^{(N)}}\rightarrow \bra{\Psi}S\ket{\Psi}.\end{align}
It remains to be shown that the state $\ket{\Psi}$ is pure. But this is obvious as it is a vector in the large $N$ vacuum Hilbert space. This concludes the proof of the theorem.

\subsection{Comments on the proof}

It is worth reflecting on why the proof presented above works. In particular, the fact that several methods from mathematical analysis can be applied to the case at hand, whereas they cannot in various other examples of interest, sheds light on the physical mechanism underlying the result. This section offers some general comments on the physical interpretation of various techniques applied in the proof, and why they would not work in many other cases of interest in holography. Possible generalizations of the proof are also envisioned.

\begin{enumerate}

\item{\textbf{Dominated convergence and the entanglement wedge.}} The basic idea of the proof of Theorem \ref{thm:main} is strongly related to one of the central results in Lebesgue integration theory: the dominated convergence theorem. In functional analysis, it is often interesting to study sequences of functions $(f_N)$, which may be defined either on a discrete space or on a continuous space $X$. It is often the case that the sequence $(f_N)$ has a limit in the sense of pointwise convergence, i.e.
\begin{align}
\forall x\in X,\, f_N(x)\rightarrow f(x).
\end{align}
In general, this fact is \textbf{not} enough to conclude that
\begin{align}
\int_X f_N(x)dx\rightarrow \int_X f(x)dx.
\label{eq:intconv}
\end{align}
One can think of many counterexamples: for example, the family $(f_N)$ defined on $\mathbb{R}$ by \begin{align}
f_N(x)= 1 \;\text{if}\; x\in[N,N+1], 0 \;\text{else,}
\end{align}
satisfies, for all $x\in \mathbb{R}$,
\begin{align}
f_N(x)\rightarrow 0,
\end{align}
and yet for all $N$,
\begin{align}
\int_{-\infty}^\infty f_N(x)dx = 1.
\end{align}
However, the dominated convergence theorem tells us we actually \textbf{can} conclude that the integral converges under an additional summability condition: if there exists a function $g$ on $X$ such that  
\begin{align}
\forall x\in X,\,\;\forall N\in\mathbb{N},\; \lvert f_N(x)\rvert \leq g(x),
\end{align}
and 
\begin{align}
\int_X g(x)dx<\infty,
\end{align}
then we can actually conclude that the integral convergence \eqref{eq:intconv} holds. This condition is of course obviously violated in the above counterexample, since the $f_N$'s take a constant positive value on an interval of width 1, arbitrarily far from the origin.

In Step 2 of the proof of the theorem, a subset of indices $\{\phi(N)\}$ for which all of the $c_{AB}^{\phi(N)}$ converge towards a limit $c_{AB}^{0}$ was extracted. In general, this would not be enough to conclude that the sequence of correlation functions in the $\ket{\Psi^{\phi(N)}}$ converge towards the ones in a state where the $c_{AB}^{\phi(N)}$ have been replaced by the $c_{AB}^{0}$. However, since it was also shown that the state can be effectively truncated to a microcanonical window of finite size independent of $N$, and the $c_{AB}^{(N)}$ have a modulus upper bounded by $1$, it is possible to apply the dominated convergence theorem to the $c_{AB}^{(N)}$ and conclude that the correlation functions converge towards the ones in a state where the $c_{AB}^{0}$ are taken to be constant.

This situation should be contrasted with the one that would have arisen had we considered a version of \eqref{eq:ASSState} where the $c_{AB}^{(N)}$ are taken \textbf{above} the Hawking--Page temperature. For such states, for any fixed value of $AB$, $c_{AB}^{(N)}$ is expected to converge to zero. This is because the support of $\ket{\Psi^{(N)}}$ gets pushed to infinity at large $N$, just like in the simple counterexample presented above. Most of the support of the state $\ket{\Psi^{(N)}}$ is not restricted to a microcanonical window of fixed size close to the ground state, but rather to a window of energy of $O(N^2)$. In particular, this window is $N$-dependent, and it would not be right to approximate the correlations in the state $\ket{\Psi^{(N)}}$ with the ones in a state where the $c_{AB}^{(N)}$ have been given a value that is independent of $N$. It is this failure of applicability of the dominated convergence theorem that makes the situation much richer for states of higher energy, in particular it is no longer forbidden for the state $\ket{\Psi^{(N)}}$ to asymptote to a mixed state in the causal wedge at large $N$, and in turn for a bulk dual that is larger than the causal wedge - for example, containg the throat of a long wormhole. These kinds of violations of the assumptions of the dominated convergence theorem are intimately related to the exotic nature of the large $N$ limit at high temperature mentioned in \cite{Wit21b,SchWit22}.\footnote{I am grateful to Hong Liu for pointing out the relevance of the second reference.}

The lesson to draw from this informal discussion is that the dominated convergence theorem makes the properties of the large $N$ limit extremely constrained for states of $O(N^0)$ energy. For states of parametrically large energy, the situation is much richer.

\item{\textbf{Potential generalizations}.\footnote{I am grateful to Netta Engelhardt for discussions on this point.} The proof of Theorem \ref{thm:main} shows that a family of states that remains an $O(N^0)$ amount of energy away from the vacuum and that has a large $N$ limit on single trace operators must actually be representable as a pure state in the vacuum sector. It would be interesting to try to run a similar logic for other sectors. For example, one could try to show that a suitably defined $O(1)$ perturbation of a given large $N$ black hole background cannot lead to the emergence of a baby universe behind the horizon. An argument backing up this conjecture was actually given in \cite{EngGes25a} in the context of the simple entropy proposal for a black hole formed from fast collapse.}
\end{enumerate}

\section{How to evade the conclusion: observers, averaging, and all that}
\label{sec:Discuss}

The theorem proven above confirms, at a mathematical level of rigor, that a closed universe geometry cannot emerge in the large $N$ limit from a sequence of finite $N$ states of $O(1)$ energy. This means that in order to fit closed cosmologies into the framework of the AdS/CFT correspondence, we need to change some rules of the game. In particular, a possible interpretation of this result is that the closed universe fails to emerge \textbf{to a boundary observer who applies the traditional rules of AdS/CFT}. Such a boundary observer would be able to detect that the causal wedge is in an overall pure state. On the other hand, to an observer that is part of the closed universe, semiclassical physics may still emerge, which is reminiscent of the old idea of complementarity \cite{SusTho93}, as recently noted in \cite{EngGes25b}. 

Then, a question arises: what is the CFT interpretation of introducing such an internal ``observer's perspective"? Whatever it is, it must do something to evade the conclusion of the theorem presented in this work. In other words, restoring the semiclassicality of the baby universe must be done by lifting some of the assumptions of the theorem. Let us describe two proposed ways of restoring the semiclassicality of the baby universe from the CFT perspective, and explain which assumptions of the theorem these options lift.

\begin{enumerate}
\item \textbf{Coarse-graining over heavy operators.} In \cite{AntSas25}, it was pointed out that one way to possibly restore the semiclassicality of the baby universe is to coarse grain over the microscopic details of the operator $\mathbb{O}$ in \eqref{eq:ASSState}. In the context of the large $N$ limit emphasized in the present paper, this means that the correlation functions at finite $N$ are not computed in the fixed state defined in \eqref{eq:finiteN}, but rather, averaged over many finite $N$ states like \eqref{eq:finiteN}. The effect of this averaging is that the correlators are actually computed in a mixed state given by a classical superposition of many pure states of the form \eqref{eq:finiteN}. Since Theorem \ref{thm:main} assumes that each finite $N$ state is pure, this method evades its conclusion by computing the correlators in a mixed state defined through coarse-graining over $\mathbb{O}$ at each finite $N$. The price to pay is a modification of which states are identified between the bulk and the boundary: the large $N$ state is now identified with a family of mixed states rather than the family of pure states of the form \eqref{eq:ASSState}.
\item \textbf{Averaged large $N$ limits.} If one does not wish to change the family of finite $N$ states that the large $N$ geometry is identified with, another possible course of action is to generalize what is meant by large $N$ limit. Theorem \ref{thm:main} assumes a conventional notion of large $N$ limit: all the finite $N$ correlators of finite $N$ single trace operators converge to a fixed value as $N\rightarrow\infty$. However, it is possible to ask for a weaker requirement: for example, one can decide to only require that the correlators converge \textbf{when averaged over a sufficiently large window of values of $N$}. This is the approach undertaken in the work of Liu \cite{Liu25}, set to appear on the same day as the present paper. By introducing a family of finite $N$ states which do not have a limit as $N\rightarrow\infty$ in terms of correlation functions, but do when averaged over a sufficiently large window of values of $N$, it is possible to show that such averages lead to large $N$ correlators consistent with a mixed state in the causal wedge. It will be argued in \cite{Liu25} that generic sequences of finite $N$ low temperature partially entangled thermal states actually only converge on average, and that the average limit is consistent with a semiclassical description containing a closed universe. By only requiring a convergence on average in Theorem \ref{thm:main}, it is possible to evade its conclusion without changing the family of finite $N$ states under consideration. The challenge for such an approach is then to explain how one can systematically introduce $N$-averaging in the AdS/CFT correspondence while preserving gravitational perturbation theory in the bulk. 
\end{enumerate}

The two methods mentioned above, based on averaging either over the details of the operator $\mathbb{O}$ at each finite $N$ or over the value of $N$ itself, represent a departure from the traditional way we have been thinking about holography. In most ususal treatments of AdS/CFT, it is assumed that a \textbf{fixed} state at a \textbf{fixed} value of $N$ can be well-approximated by the infinite $N$ theory, up to errors that are suppressed in $1/N$. In particular, in holographic code models, the boundary state, as well as the value of $N$, are fixed \cite{FauLi22,Ges23}. Yet, Theorem \ref{thm:main} shows that such departures from traditional AdS/CFT are unavoidable in order to describe the experience of a semiclassical observer in a closed universe while not completely giving up on a description based on a family of finite $N$ CFTs. The relationship between the two above methods and the ``observer rules" introduced in \cite{HarUsa25,AbdSte25,AkeBue25}, which have also been shown to be able to restore a semiclassical description of the baby universe \cite{EngGes25b,AbdSte25}, is likely to be intimate, but remains to be explored in greater detail. More generally, the general principles justifying the methods described above have not been discovered yet. It is an exciting time for holography!

\section*{Acknowledgements}
It is a pleasure to thank Netta Engelhardt and Hong Liu for extensive discussions and encouragement to publish this work. I would also like to thank Suzanne Bintanja, Kasia Budzik, Netta Engelhardt, Daniel Harlow, Hong Liu, Leo Shaposhnik, Mykhaylo Usatyuk and Wayne Weng for very helpful discussions and comments on a draft of this paper. This work was made possible by the support of the Gordon and Betty Moore Foundation via the Black Hole Initiative, the Heising Simons Foundation under grant number 2023-4430, and the Department of Energy Office of Science under Early Career Award DE-SC0021886.\\

\bibliographystyle{jhep}
\bibliography{all}
\end{document}